\newcolumntype{R}[1]{>{\RaggedLeft\arraybackslash}p{#1}}
\newcolumntype{L}[1]{>{\RaggedRight\arraybackslash}p{#1}}
\newcolumntype{C}[1]{>{\centering\arraybackslash}p{#1}}
\newtheorem{problem}{Problem}
\newtheorem{definition}{Definition}
\newtheorem{case}{Case}
\newcommand{\hide}[1]{}
\newcommand{\reminder}[1]{{\textsf{\textcolor{red}{[#1]}}}}
\newcommand{\hadoop}{\textsc{Hadoop}\xspace}
\newcommand{\bit}{\begin{compactitem}}
\newcommand{\eit}{\end{compactitem}}
\newcommand{\ben}{\begin{compactenum}}
\newcommand{\een}{\end{compactenum}}
\newcommand{\ind}{{\texttt{indegree}}\xspace}
\newcommand{\outd}{{\texttt{outdegree}}\xspace}
\newcommand{\inwv}{{\texttt{inweight-v}}\xspace}
\newcommand{\outwv}{{\texttt{outweight-v}}\xspace}
\newcommand{\inwr}{{\texttt{inweight-r}}\xspace}
\newcommand{\outwr}{{\texttt{outweight-r}}\xspace}
\newcommand{\iatavg}{{\texttt{IAT-avg}}\xspace}
\newcommand{\iatvar}{{\texttt{IAT-variance}}\xspace}
\newcommand{\iatmin}{{\texttt{IAT-min}}\xspace}
\newcommand{\iatmax}{{\texttt{IAT-max}}\xspace}
\newcommand{\lifetime}{{\texttt{lifetime}}\xspace}
\newcommand{\mA}{{\mathcal{A}}}
\newcommand{\mVs}{{\mathcal{V}_s}}
\newcommand{\mVd}{{\mathcal{V}_d}}
\newcommand{\mE}{{\mathcal{E}}}
\newcommand{\mV}{{\mathcal{V}}}
\newcommand{\mF}{{\mathcal{F}}}
\newcommand{\mP}{{\mathcal{P}}}
\newcommand{\mS}{{\mathcal{S}}}
\newcommand{\method}{{\sc LookOut}\xspace}
\newcommand{\baseline}{{\sc LookOut-Na\"{i}ve}\xspace}
\newcommand{\tstrut}{\rule{0pt}{2.4ex}}
\newcommand{\bstrut}{\rule[-1.0ex]{0pt}{0pt}}
\newcommand{\tgraph}{{$t$-graph}\xspace}
\newcommand{\tgraphs}{$t$-graphs\xspace}
\newcommand{\eg}{\textit{e.g.,}\xspace}
\newcommand{\ie}{\textit{i.e.,}\xspace}
\newcommand{\graph}{\mathcal{G}\xspace}
\definecolor{OliveGreen}{rgb}{0,0.6,0}
\newcommand{\tick}{\textcolor{OliveGreen}{\ding{52}}}
\newcommand{\enron}{\textsc{Enron}\xspace}
\newcommand{\lbnl}{\textsc{LBNL}\xspace}
\newcommand{\dblp}{\textsc{DBLP}\xspace}
\newcommand{\explanationquality}{Quality of Explanation}
\newcommand{\scalability}{Scalability}
\newcommand{\discoveries}{Discoveries}
\newcommand{\mywidth}{1}
\newcommand{\incrimination}{incrimination\xspace}
\begin{document}

\title{\method~on Time-Evolving Graphs:\\ Succinctly Explaining Anomalies from Any Detector}

\author{
Nikhil Gupta\\
IIT Delhi \\
{\em cs5140462@cse.iitd.ac.in}\\ 
\and
Dhivya Eswaran, Neil Shah, Leman Akoglu, Christos Faloutsos\\
Carnegie Mellon University \\ 
{\em \{deswaran, neilshah, lakoglu, christos\}@cs.cmu.edu}\\
}

\date{\vspace{-5ex}}
\maketitle


\begin{abstract}

Why is a given node in a time-evolving graph ($t$-graph) marked as an anomaly by an off-the-shelf detection algorithm?  Is it because of the number of its outgoing or incoming edges, or their timings?  How can we best convince a human analyst that the node is anomalous?  Our work aims to provide succinct, interpretable, and simple explanations of anomalous behavior in $t$-graphs (communications, IP-IP interactions, etc.) while respecting the limited attention of human analysts.  Specifically, we extract key features from such graphs, and propose to output a few pair (scatter) plots from this feature space which ``best'' explain known anomalies. To this end, our work has four main contributions: (a) \textbf{problem formulation}: we introduce an ``analyst-friendly'' problem formulation for explaining anomalies via pair plots, (b) \textbf{explanation algorithm}: we propose a plot-selection objective and the \method algorithm to approximate it with optimality guarantees, (c) \textbf{generality}:  our explanation algorithm is \textit{both} {domain-} and {detector-agnostic}, and (d) \textbf{scalability}: we show that \method scales linearly on the number of edges of the input graph.  Our experiments show that \method performs near-ideally in terms of maximizing explanation objective on several real datasets including Enron e-mail and DBLP coauthorship.  Furthermore, \method produces fast, visually interpretable and intuitive results in explaining ``ground-truth'' anomalies from Enron, DBLP and LBNL (computer network) data.  
\end{abstract}

\section{Introduction}
\label{sec:intro}
Given a time-evolving graph (hereafter referred to as a $t$-graph) in the form $\langle$source, destination, timestamp, value$\rangle$ (such as IP-IP communications in bits or user-merchant spendings in dollars over time), and a list of anomalous nodes (identified by an off-the-shelf ``black-box'' detector or any other external mechanism), how can we  {\em explain} the anomalies to a human analyst in a succinct, effective, and interpretable fashion?

\begin{figure}[!t]	
	\begin{tabular}{c|c}
	\multirow{2}{*}[0.3\columnwidth]{\includegraphics[width=0.45\columnwidth]{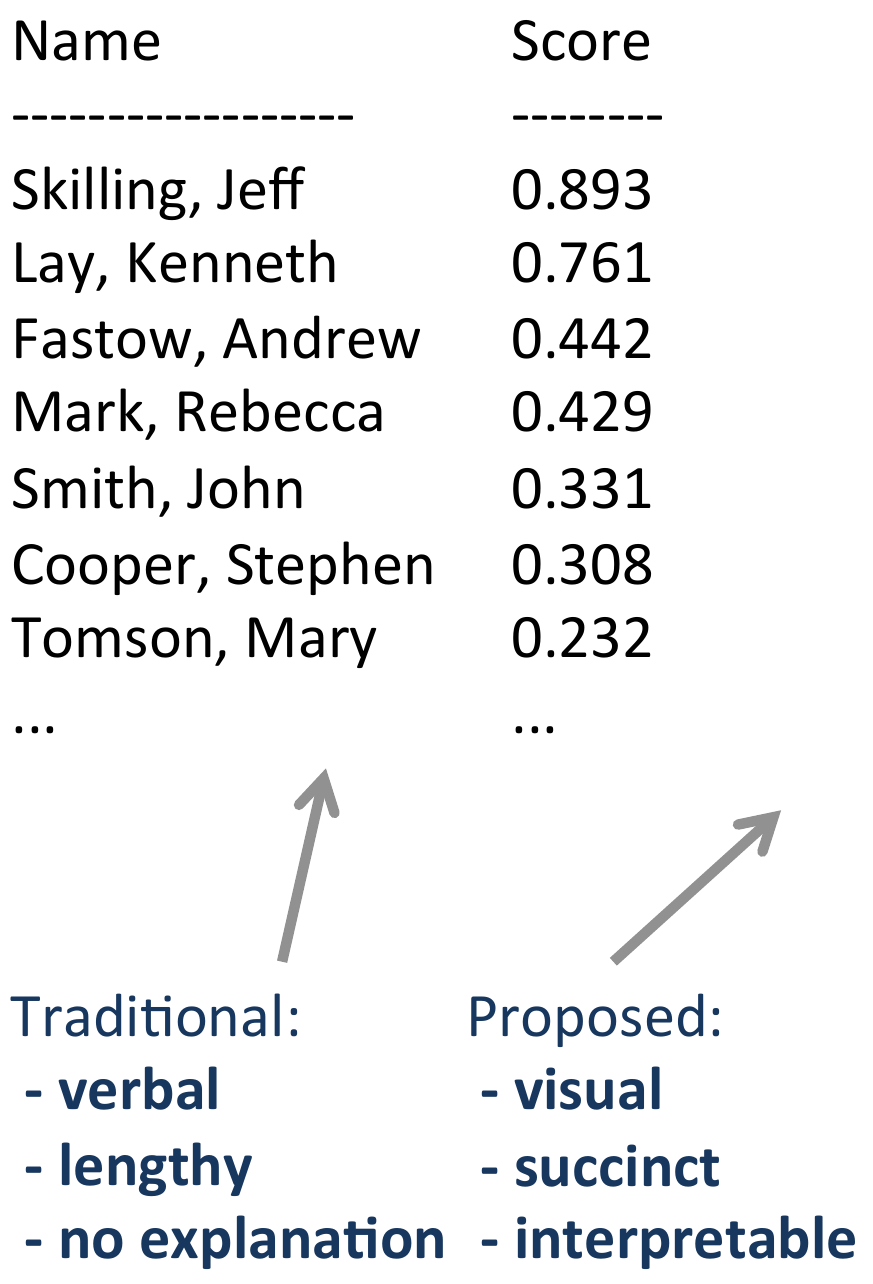}} & \includegraphics[width=0.5\columnwidth]{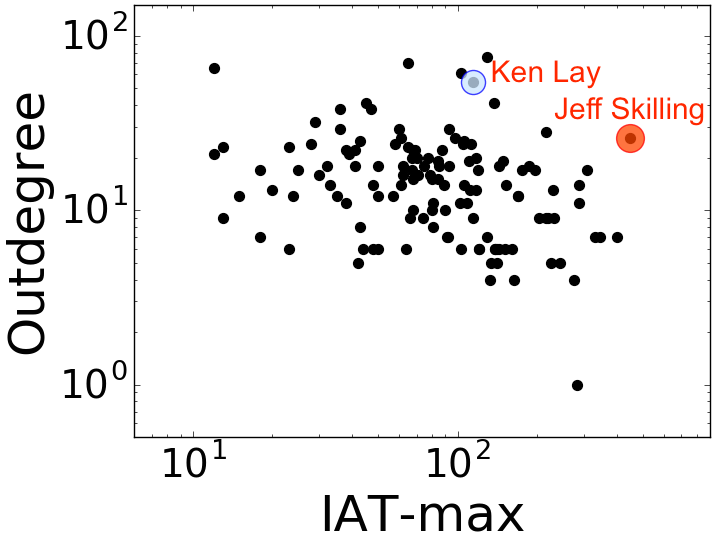} \\
	  &  \includegraphics[width=0.5\columnwidth]{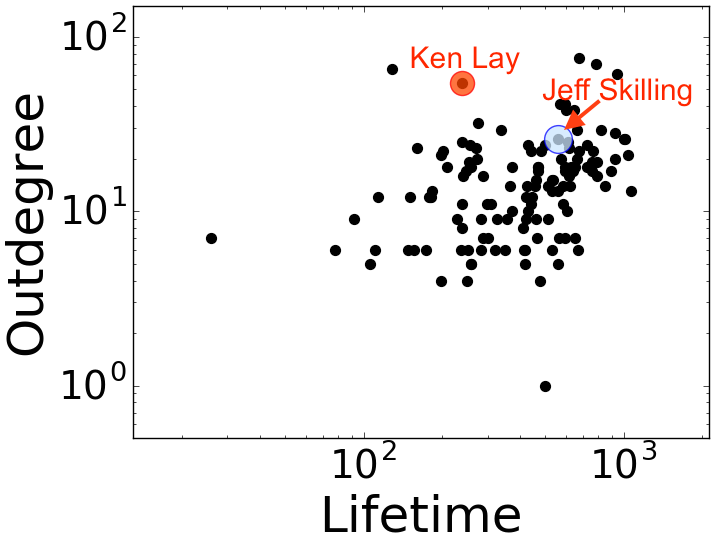}\\
	\end{tabular}
		\caption{\method explains Enron founder/CEO ``Ken Lay'' and COO ``Jeff Skilling'' by two ``pair plots'' in which they are most salient. 
			Compared to traditional 
			ranked list output (left), \method 
			produces simpler, more interpretable explanations (right).
			\label{fig:jewel}}
\vspace{-0.25in}	
\end{figure}

Anomaly detection is a widely studied problem. Numerous detectors exist for point data \cite{books/sp/Aggarwal2013,LOF,iForest}, time series \cite{journals/tkde/GuptaGAH14}, as well as graphs \cite{Oddball,journals/datamine/AkogluTK15}. 
However, the literature on anomaly explanation or description is perhaps surprisingly sparse.
Given that the outcomes (i.e. alerts) of a detector often go through a ``vetting'' procedure by human analysts, 
it is extremely beneficial to provide explanations for such alerts which can empower analysts in sensemaking and reduce their efforts in troubleshooting and recovery. Moreover, such explanations should justify the anomalies as succinctly as possible in order to save analyst time.

Our work sets out to address the anomaly explanation problem, with a focus on time-evolving graphs.  Consider the following
{\bf example situation:}
Given the IP-IP communications over time from an institution, 
a network analyst could face two relevant scenarios.
\bit 
\item {\em Detected anomalies:} For monitoring, s/he could use any ``black-box'' anomaly detector to spot suspicious IPs. Here, we are oblivious to the specific detector, knowing only that it flags anomalies, but does not produce any interpretable explanation.
\item {\em Dictated anomalies:} Alternatively, anomalous IPs may get reported to the analyst externally (e.g., they crash or get compromised).
\eit
In both scenarios, the analyst would be interested in understanding {\em in what ways} the pre-identified anomalous IPs (detected or dictated) differ from the rest.

In this work, we propose a new approach called \method, for explaining a given set of anomalies, and apply it to the scope of $t$-graphs. 
At its heart, \method provides interpretable {\em visual} explanations through simple, easy-to-grasp plots, 
which ``incriminate'' the given anomalies the most.
We summarize our contributions as follows.

\bit
\item {\bf Anomaly Explanation Problem Formulation:}
We introduce a new formulation that explains anomalies through ``pair plots''.
In a nutshell, given the list of anomalies from a $t$-graph, we aim to find a few pair plots on which the total ``blame'' that the anomalies receive is maximized.  Our emphasis is on two key aspects: 
(a) \textit{interpretability}: our plots visually incriminate the anomalies, and (b) \textit{succinctness}: we show only a few plots to respect the analyst's attention; the analysts can then quickly interpret the plots, spot the outliers, and verify their abnormality given the discovered feature pairs. 


\item {\bf Succinct Explanation Algorithm \method:} 
We propose the \method algorithm to solve our explanation problem.
Specifically, we develop a plot selection objective that lends itself to monotone submodular function optimization, which we solve efficiently with optimality guarantees.
Figure \ref{fig:jewel} illustrates \method's performance on the Enron communications network, where it discovers two pair plots which maximally incriminate the given anomalous nodes: Enron founder ``Ken Lay'' and CEO ``Jeff Skilling.''	 Note that the anomalies stand out visually from the normal nodes.

\item {\bf Generality:} 
\method is general in two respects: it is (a) {\em domain-agnostic}, meaning it is suitable for real-world $t$-graphs from various domains that involve human actors, and
(2) {\em detector-agnostic}, meaning it can be employed to explain anomalies produced by any detector or identified through any other mechanism (e.g., crash reports, customer complaints, etc.)

\item {\bf Scalability:}
We show that \method requires time linear on the number of edges in the input $t$-graph, which is simply the cost incurred to compute feature values. In fact, the pair plot scoring and selection steps are carefully designed to add only a constant cost, which is independent of graph size (see Lemma \ref{theo:scale} and Fig. \ref{fig:scalability}).

\eit
\vspace{0.05in}

We experiment with real $t$-graph datasets from diverse domains including e-mail and IP-IP communications, as well as co-authorships over time, which demonstrate the effectiveness, interpretability, succinctness and generality of our approach. 

 {\bf Reproducibility:} \method is open-sourced at \url{https://github.com/NikhilGupta1997/Lookout}, and our datasets are publicly available (See Section \ref{ssec:data}).

\hide{
Given a large number of 2D scatter plots, what is the best way to choose a limited number of these plots to best explain the outlier nature of the anomalies on a given dataset?

Here we propose \method, a method which gives a score to each scatter plot based on how well it is able to explain the nature of each of the outliers. Based on the limitation on the number of plots that can be selected, the algorithm will only choose the best scored plots with minimum intersection between plots.



}


\section{Preliminaries and Problem Statement}
\label{sec:prelim}

\begin{table}[!t]
	\begin{center}
		\caption{Symbols and Definitions \label{tab:dfn}}
		\vspace{-0.1in}
		\setlength{\arrayrulewidth}{.1em}
		\begin{tabular}{r l}
			\hline  
			\bf{Symbol} & \bf{Definition} \tstrut \bstrut\\ 
			\hline 
			$\mathcal{G}$  & Input graph, $\mathcal{G} = (\mathcal{V},\mathcal{E})$, $|\mathcal{V}|=n, |\mathcal{E}|=m$ \\ 
			$\mathcal{A}$  &  Input anomalies, $|\mA|=k$ \\  
			$\mathcal{F}$  &  Set of node features, $|\mF|=d$ \\  
			$\mathcal{P}$  &  Set of pair plots, $|\mP|=d(d-1)/2=l$ \\ 
			${s}_{i,j}$  & Anomaly score of $a_i\in \mA$ in plot $p_j\in \mP$ \\ 		
			$\mathcal{S}$  & Subset of selected plots  \\ 
			$f(\mathcal{S})$  & Explanation objective function \\
			$\Delta_f(p \ |\ \mathcal{S})$  & Marginal gain of plot $p$ w.r.t $\mathcal{S}$\\
			$b$  & Budget, i.e., maximum cardinality of $\mathcal{S}$ \\   \hline
		\end{tabular} 
	\end{center}
	\vspace{-0.3in}
\end{table}

\subsection{Notation}
Formally, a \tgraph~is a collection of time-ordered edges $\graph(\mVs\cup\mVd,\mE) = \{e_1, e_2, \ldots\}$ where each edge $e \in \mE$ is a tuple $\langle v_s, v_d, ts, val\rangle$. Here, $v_s \in \mVs$ and $v_d \in \mVd$ are the source and destination nodes of $e$, $ts$ is the time stamp associated with $e$, and $val$ is a value on $e$.
For unipartite graphs (like IP-IP) $\mVs=\mVd$, and for bipartite graphs (like user-ATM) $\mVs\cap \mVd = \emptyset$.
Edge value could be categorical (edge type, \eg~withdrawal vs. deposit) or numerical (edge weight, \eg~dollar amount). 
We denote the number of nodes by $|\mVs\cup \mVd|=|\mV|=n$. We refer to the number of edges $|\mE|=m$ as the size of $\graph$.  Note that such graphs can be considered multigraphs, as multiple edges between $v_s$ and $v_d$ are permitted.

The set of anomalous nodes given as input is denoted by
$\mA\subset \mV$, $|\mA| = k$. For bipartite graphs, anomalies could be suspicious sources (\eg~users) $\mA \subset \mVs$ or destinations (\eg~ATMs) $\mA \subset \mVd$.

To characterize the anomalies in $t$-graphs, we use a list of features for each node, denoted by 
$\mF = \{f_1, f_2, \ldots, f_d\}$. We describe our node features next.

\vspace{-0.05in}
\subsection{Features for $t$-graphs}
\label{ssec:features}

We characterize the nodes in a time-evolving graph with numerical properties they exhibit, and explain node anomalies using these derived features.
The literature is rich on feature extraction from graphs \cite{Oddball,conf/kdd/HendersonGLAETF11,conf/pakdd/GiatsoglouCSBFV15,conf/icdm/ShahBHAGMKF16,perozzi2014deepwalk,conf/kdd/GroverL16}.

Oddball \cite{Oddball} introduced and successfully used various relational features for anomaly detection in static weighted graphs. We leverage several of these features including (1) \ind, (2) \outd, (3) \inwv, and (4) \outwv, where weight of an edge $\langle v_s, v_d, ts, val\rangle$ is the value $val$. 
Since we work with temporal graphs, we also use (5) \inwr~and (6) \outwr, where edge weight depicts the number of repetitions of the edge (multi-edge).

In addition to structure, we have the time information. To this end, we leverage the inter-arrival times (IAT) between the edges. IAT distributions have been successfully used in a number of prior works on fraud detection \cite{conf/pakdd/GiatsoglouCSBFV15,conf/icdm/ShahBHAGMKF16}. We specifically use:
{\tt IAT} (7) {\tt average}, (8) {\tt variance}, (9-11) {\tt minimum}, {\tt median},  and {\tt maximum}.
We also define (12) \lifetime, the gap between the first and the last edge.

Admittedly, there are numerous other features one could extract from graphs.
In this work,  our focus is on the explanation problem, and thus we mainly build on existing features that were shown to be useful in prior anomaly and fraud detection tasks.
One could easily extend our list with others, such as recursive structural  features \cite{conf/kdd/HendersonGLAETF11} or node embeddings \cite{perozzi2014deepwalk,conf/kdd/GroverL16},
but we recommend using interpretable features that have direct meanings for analyst benefit, as well as scalable to compute for large graphs.

\hide{
A natural way to perform anomaly detection is to first extract real-valued features for the entities of interest (sources) and then apply one of the available state-of-the-art anomaly detection approaches. However, what features for sources does one extract from \tgraphs? We divide the space of possible features into two: \textit{(i) relational} features, which account for the graph structure and \textit{(ii) temporal} features which focus on the time aspect. Our empirical studies revealed promise in the following \textit{domain-agnostic} features:

\begin{itemize}
	\item \textbf{Relational features:}
	\begin{enumerate}
		\item Number of unique sources\footnote{\label{note1}Only if the set of \textit{source} elements is equivalent to the set of \textit{destination} elements}
		\item Number of edges incident
		\item Number of unique destinations
		\item Number of outgoing edges
		\item Total weight of outgoing edges
	\end{enumerate}
	\item \textbf{Temporal features:}
	\begin{enumerate}
		\item Lifetime
		\item Quantiles of inter-arrival times
		\item Variance in inter-arrival times
		\item Mean of inter-arrival times
		\item Median of inter-arrival times
	\end{enumerate}
\end{itemize}

}




\hide{
\subsection{Overview}

Given $\graph$ and $\mA$, we follow three main steps toward explanation.

{\em 1. Extracting features:} We extract features for all the nodes in $\graph$ (Section \ref{ssec:features}).
Simply put, those reflect the relational and temporal properties of the nodes.
We look to explain the anomalies within this feature space.

{\em 2. Scoring the anomalies:} 
The explanations should be simple and interpretable. Even better, they should be easy to illustrate to humans.
To this end, we use feature pairs. That is, we generate $d \choose 2$ 2-d spaces by creating all pairwise combinations.
We then score the nodes in $\mA$ within each 2-d space by anomalousness. In other words, each input anomaly $a_i \in \mA$ will receive an anomalousness score from each feature pair $p_j$, denoted as $s_{i,j}$ (Section \ref{ssec:scoring}).

These 2-d spaces are easy to illustrate visually in a pair plot:
\begin{definition}[Pair Plot] Given a pair of features $f_x, f_y \in \mF$,
a pair plot $p$ is exactly a scatter plot of all nodes $v\in \mV$ with corresponding $x$ and $y$ values.
Anomalies ``explained away'' by $p$ are denoted by $\mA_p \subseteq \mA$,
and marked in `red' on the visualization.
\end{definition}
Moreover, anomalies in 2-d are easy to interpret: \eg~``point $a$ has too many/too little $y=$dollars for its $x$=number of accounts''. Given the pair plot, the analyst can spot the anomalies simply by eyeballing and come up with such explanations without supervision.

{\em 3. Explaining the anomalies:} 
Let us denote the set of all $d(d-1)/2=l$ pair plots by $\mP$.
Even for small $d$, showing all the pair plots would be too overwhelming for the analyst.
Moreover, some anomalies could show up in multiple plots, causing redundancy.
Ideal is to identify only a small number of pair plots, which can ``blame'' or ``explain away'' the anomalies to the largest possible extent. In other words, our goal is to select a subset $\mS$ of $\mP$, on which every anomaly in  $\mA$ receives as high anomalousness score as possible (Section \ref{ssec:selecting}).

}

\vspace{-0.1in}
\subsection{Intuition \& Proposed Problem}

The explanations we seek to generate should be simple and interpretable. Moreover, they should be easy to illustrate to humans who will ultimately leverage the explanations.
To this end, we decide to use ``pair plots'' (=scatter plots) for anomaly justification, due to their visual appeal and interpretability. That is, we consider ${d \choose 2} = \frac{d(d-1)}{2}$ 2-d spaces by generating all pairwise feature combinations.  Within each 2-d space, we then score the nodes in $\mA$ by their anomalousness (Section \ref{ssec:scoring}).

Let us denote the set of all $d \choose 2$ pair plots by $\mP$.
Even for small values of $d$, showing all the pair plots would be too overwhelming for the analyst.
Moreover, some anomalies could redundantly show up in multiple plots.  Ideally, we would identify only a few pair plots, which could ``blame'' or ``explain away'' the anomalies to the largest possible extent. In other words, our goal would be to output a small subset $\mS$ of $\mP$, on which nodes in $\mA$ receive high anomaly scores (Section \ref{ssec:selecting}).

Given this intuition, we formulate our problem:
\begin{problem}[Anomaly Explanation]
\bit
	\item \textbf{Given} 
	\bit
	 \item[(a)] a list of anomalous nodes $\mA\subset \mV$ from a \tgraph, either (1) detected by an off-the-shelf detector or (2) dictated by external information, and
		\item[(b)] a fixed budget of $b$ pair plots;
	\eit
	\item \textbf{find} the best such pair plots $\mS\subset \mP$, $|\mS|=b$
	
	\item \textbf{to maximize} the total maximum anomaly score of anomalies that we can ``blame'' through the $b$ plots.
	\eit
\end{problem}

\section{Proposed Algorithm \method}
\label{sec:meth}

\hide{
$T$-graphs occur in a number of scenarios such as e-commerce websites, communication and transportation networks. We give two examples below:
\begin{enumerate}
	\item \textbf{Computer networks} consists of multiple devices, each identified by a unique IP address, exchanging messages with each other. Thus, each message constitutes a temporal edge. In this case, source and destination sets are equal, \ie $\mathbb{S} = \mathbb{D}$ and consists of all IP addresses in the network. Edge metadata could include the number of packets  exchanged in the message (a non-negative integer) and the protocol used (categorical).
	\item \textbf{Credit card transactions data} consists of purchases made by consumers, each identified by a card or account number, at various merchants. Observe that the set of sources and destinations are disjoint in this example. Each edge (purchase) may be accompanied by an amount and location of transaction.
\end{enumerate}

Our work addresses the problem of anomaly detection and description in such \tgraphs. We formally state these below.
\begin{problem}[Anomaly Detection]
	\textbf{Given} a \tgraph $\graph = \{e_1, e_2, \ldots\}$ and a required number of anomalies $k$,
	\textbf{find} the $k$ most anomalous sources\footnote{with no loss of generality; to identify anomalous destinations, we need only swap the order of first two entries in each temporal edge tuple.}.
\end{problem}

}

In this section, we detail our approach for scoring the input anomalies by pair plots, our selection criterion and algorithm for choosing pair plots, the overall complexity analysis of \method, and conclude with discussion.

\vspace{-0.05in}
\subsection{Scoring by Feature Pairs}
\label{ssec:scoring}

Given all the nodes $\mV$, with marked anomalies $\mA\subset \mV$, and their extracted features $\mF \in \mathbb{R}^d$, our first step is to quantify how much ``blame'' we can attribute to each input anomaly in $\mathbb{R}^2$.
As previously mentioned, 2-d spaces are easy to illustrate visually with pair plots. Moreover, anomalies in 2-d are easy to interpret: \eg~``point $a$ has too many/too little $y$=dollars for its $x$=number of accounts''. Given a pair plot, an analyst can easily discern the anomalies visually, and come up with such explanations without any further supervision.

We construct 2-d spaces $(f_x,f_y)$ by pairing the features $\forall x,y=\{1,\ldots,d\}, x\neq y$. Each pair plot $p_j\in \mP$ corresponds to such a pair of features, $j=\{1,\ldots,{d\choose{2}}\}$.
For scoring, we consider two different scenarios, depending on how the input anomalies were obtained.

If the anomalies are {\em detected} by some ``black-box'' detector available to the analyst, 
we can employ the same detector on all the nodes (this time in 2-d) and thus obtain the scores for the nodes in $\mA$. 

If the anomalies are {\em dictated}, i.e. reported externally, then the analyst could use any off-the-shelf detector, such as LOF \cite{LOF}, DB-outlier \cite{conf/vldb/KnorrN99}, etc.
In this work, we use the Isolation Forest (iForest) detector \cite{iForest} for two main reasons: (a) it boasts {\em constant} training time and space complexity due to its sampling strategy, and (b) it has been shown empirically to outperform alternatives \cite{emmott2013systematic} and is thus state-of-the-art.
However, note that none of these existing detectors has the ability to \textit{explain} the outliers, especially iForest, as it is an ensemble approach.

By the end of the scoring process, each anomaly receives a total of $|\mP|=l$ scores. 


\subsection{Explaining by Pair Plots}
\label{ssec:selecting}

While scoring in small, 2-d spaces is easy and can be trivially parallelized, 
presenting all such pair plots to the analyst would not be productive given their limited attention budget.
As such, our next step is to carefully select a short list of plots that best blame all the anomalies collectively, where the plot budget can be specified by the user.

While selecting plots for justification, we aim to incorporate the following criteria:
\bit
\item \textbf{incrimination power}; such that the anomalies are scored as highly as possible,
\item \textbf{high expressiveness}; where each plot incriminates multiple anomalies, so that the explanation is \textit{sublinear} in the number of anomalies, and
\item \textbf{low redundancy}; such that the plots do not repeatedly explain similar sets of anomalies.
\eit
\vspace{0.025in}
We next introduce our objective criterion which satisfies the above requirements.

\subsubsection{Objective function}
At this step of the process, we can conceptually think of a complete, weighted bipartite graph between the $k$ input anomalies $\mathcal{A} = \lbrace{a_1,\dotsc,a_k}\rbrace$ and $l$ pair plots $\mathcal{P} = \lbrace{p_1,\dotsc,p_l}\rbrace$, in which edge weight $s_{i,j}$ depicts the
anomaly score that $a_i$ received from $p_j$, as illustrated in Fig. \ref{toy}.

\begin{figure}[tb!]
	\centering
	\begin{tabular}{cc}
		\includegraphics[width=0.8\columnwidth]{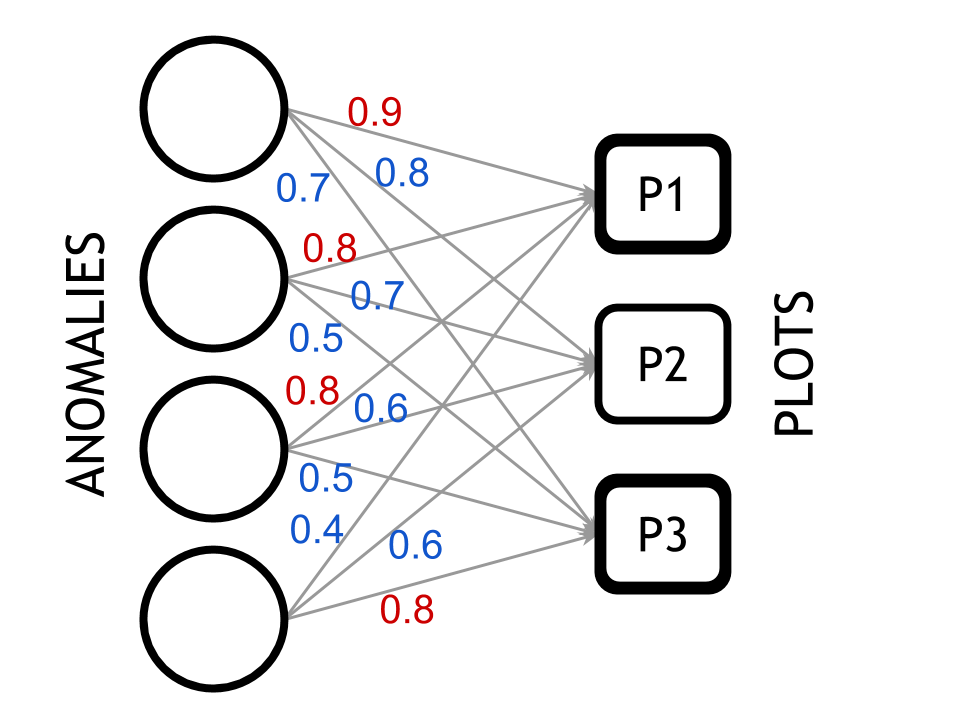}
	\end{tabular}
\vspace{-0.1in}
	\caption{\method~selection with $k=4$ anomalies, $l=3$ plots, and budget $b=2$. P1 is picked first due to maximum \emph{total incrimination} (2.9) (sum of incoming edge weights). Next P3 is chosen over P2, due to higher marginal gain (0.4 vs 0.2). 		\label{toy}}
\end{figure}

We formulate our objective to \textit{maximize} the total \textit{maximum} score of each anomaly amongst the selected plots, as given in Eq. \eqref{obj}:
\begin{equation}
\label{obj}
{\max_{{\mS\subset \mP, |\mS|=b}}}\;\;\; f(\mathcal{S}) = \sideset{}{\limits_{a_i \in \mathcal{A}}}\sum \;\;\max_{p_j \in \mathcal{S}}\;s_{i,j}  
\end{equation}
Here, $f(\mS)$ can be considered the \emph{total incrimination} score given by subset $\mS$.
Since we are limited with a budget of plots, we aim to select those which explain multiple anomalies to the best extent.  Note that each anomaly receives their maximum score from exactly one of the plots among the selected set, which effectively \textit{partitions} the explanations and avoids redundancy. 
In the example Fig. \ref{toy}, P1 and P3 ``explain away'' anomalies $\{1,2,3\}$ and $\{4\}$ respectively, where the maximum score that each anomaly receives is highlighted in red font.

Concretely, we denote by $\mA_p$ the set of anomalies that receive their highest score from $p$, i.e.
$\mA_p = \{a_i | p = \max_{p_j \in \mathcal{S}}\;s_{i,j}\}$, where we break ties at random.
Note that $\mA_p\cap \mA_{p'} = \emptyset, \; \forall p,p'\in \mP$.
In depicting a plot $p$ to the analyst, we mark all anomalies in $\mA_p$ in red and
the rest in $\mA\backslash \mA_p$ in blue -- c.f. Fig. \ref{fig:jewel}.

\subsubsection{Subset selection}

\begin{algorithm}[!t]
	\caption{{\sc \method}\label{alg:main}} 
	\begin{algorithmic}[1]
		\REQUIRE  anomalies $\mathcal{A}$, plots $\mP$, budget $b$
		\ENSURE  set of selected plots $\mathcal{S}$
		\STATE initialize $\mathcal{S} \to \emptyset$
		\STATE calculate $\Delta_f(p \ |\ \mathcal{S}) \ \forall \ p \in \mathcal{P}$
		\STATE generate sorted queue $Q$ on marginal gains
		\WHILE{$|\mathcal{S}| < b$}
		\STATE $p_{top}:=$ pop from $Q$
		\STATE update $\Delta_f(p_{top} \ |\ \mathcal{S})$
		\IF{$\Delta_f(p_{top} \ |\ \mathcal{S}) \geq$ top of $Q$  {\em \{ranks top\}}}
		\STATE $\mathcal{S} := \mathcal{S} \cup \lbrace p_{top} \rbrace$
		\ELSE
		\STATE insert $p_{top}$ in $Q$ with updated $\Delta_f(p_{top} \ |\ \mathcal{S})$
		\ENDIF
		\ENDWHILE
		\RETURN $\mathcal{S}$ 
	\end{algorithmic}
\end{algorithm}  
\setlength{\textfloatsep}{0.1in}

Having defined our explanation objective, we need to devise a subset selection algorithm to optimize Eq. \eqref{obj}, for a budget $b$. Note that optimal subset selection is a combinatorial task.

Fortunately, our objective $f(\cdot)$ exhibits three key properties that enable us to use a greedy  algorithm with an approximation guarantee. Specifically, we can show that our set function  $f:2^{\mathcal{P}}\rightarrow \mathbb{R}_{+}$ is 
\ben
	\item[i.]  \textit{non-negative}; since the anomaly scores take non-negative values, often in $[0,1]$;
	\item[ii.]  \textit{monotonic};  as for every $\mathcal{S} \subseteq \mathcal{T}\subseteq \mathcal{P}$, $f(\mathcal{S}) \leq f(\mathcal{T})$. That is, adding more plots to a set cannot decrease the maximum score attributed to any anomaly;
	\item[iii.]  and \textit{submodular}; since for every $\mathcal{S} \subseteq \mathcal{T} \subseteq \mathcal{P}$ and for all $p \in \mathcal{P}\backslash \mathcal{T}$, 
	$
	f(\mathcal{S}\cup \{p\}) - f(\mathcal{S}) \geq f(\mathcal{T}\cup \{p\}) - f(\mathcal{T})\;.
	$
	That is, adding any plot $p$ to a smaller set can increase the function value \emph{at least} as much as adding it to its superset. 
\een
We give the submodularity proof in Appendix \ref{proof}, as non-negativity and monotonicity are trivial to show.

Submodular functions with non-negativity and monotonicity properties admit approximation guarantees under a greedy approach identified by Nemhauser et al. \cite{journals/mor/NemhauserW78}. The greedy algorithm starts with the empty set $\mS_0$. In iteration $t$, it adds the element (in our case, pair plot) that maximizes the {\em marginal gain} $\Delta_f$ in function value, defined as
\vspace{-0.1in}
$$\Delta_f(p|\mS_{t-1}) = f(\mS_{t-1}\cup \{p\}) - f(\mS_{t-1})\;.$$
\vspace{-0.1in}
That is,
$$
\mS_t := \mS_{t-1} \cup \{\arg \max_{p\in \mathcal{P}\backslash \mS_{t-1}}  \Delta_f(p|\mS_{t-1}) \}\;.
$$
\vspace{-0.1in}
Based on their proof, one can show that when $t=b$,

\vspace{-0.1in}
$$f(\mS_b) \geq \big(1-\frac{1}{e}\big) \max_{|\mS|\leq b} f(\mS) \;.$$
\vspace{-0.1in}

In other words, the simple greedy search heuristic achieves at least $63$\% of the objective value of the \textit{optimum} set.

Algorithm~\ref{alg:main} shows the psuedocode of our \method pair plot selection process.


\hide{

\begin{definition}[marginal gain]
\label{def:gain}
Consider a plot $p \in \mathcal{P} \setminus \mathcal{S}$. The the marginal gain of $p$ with respect to $\mathcal{S}$ is defined as: $$\Delta_f(p \ | \ \mathcal{S}) = f(\mathcal{S} \cup \lbrace p \rbrace) - f(\mathcal{S})$$ 
\end{definition}

Marginal gain quantitatively measures the impact a new plot $p$, added to our chosen set of plots, will have in increasing the summarization metric. It helps to take in consideration any overlapping descriptions of the data between plots and only accounts for the added contributions of the new plot w.r.t the existing set.

A very important observation about the summarization metric is that it is submodular in nature\cite{Submodular}. 
\begin{definition}
SUBMODULARITY:
A function $f : 2^\mathcal{P} \to {\rm I\!R}$ is \textit{submodular} if for every $\mathcal{A} \subseteq \mathcal{B} \subseteq \mathcal{P}$ and $p \in \mathcal{P} \setminus \mathcal{B}$ it holds that $$\Delta_f(p \ | \ \mathcal{A}) \geq \Delta_f(p \ | \ \mathcal{B})$$
\end{definition}

The intuition behind submodularity is that the benifit gained by adding a new element in a set will decrease as more and more elements are made part of the set. This is a direct implication of overlapping gains between elements. A proof that the metric follows submodularity is given in the appendix. We will exploit the properties of submodular functions while designing our algorithm in the later section.

\reminder{Visualization details:} what is gray what is red etc.

Our algorithm \method starts of with an empty set and incremently chooses the best plot greedily based on the values of the marginal gains of the remaining unchosen plots. It takes advantage of the submodularity propertly when calculating the marginal gains of all the unselected plots. As a result of the submodular nature of the summarization metric we compute the marginal gain of the top most plot in the sorted list. If this value is larger than the marginal gain of the next best plot from the previous itteration, then the current plot in examination will have a higher marginal gain than all the other remainaing plots in the current iteration. This leads to a lazy update greedy algorithm and improves the running time considerably as seen in (Section~\ref{sec:exp}).
}


 \subsection{Complexity Analysis}
 \begin{lemma}
 \label{theo:scale}
 \method~scales linearly in terms of the input size, i.e. number of edges in the input $t$-graph.
 \end{lemma}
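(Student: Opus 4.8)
The plan is to split \method's running time into its three natural phases---\textbf{(i)} feature extraction from $\graph$, \textbf{(ii)} anomaly scoring within the $l=\binom{d}{2}$ pair plots, and \textbf{(iii)} the greedy plot selection of Algorithm~\ref{alg:main}---and to show that phase (i) costs $O(m)$ while phases (ii) and (iii) add only $O(1)$ cost \emph{in $m$}, i.e.\ their cost depends solely on the fixed quantities $d$, $l=\binom{d}{2}$, the number of anomalies $k$, and the budget $b$, none of which grows with the number of edges. Summing the three phases then yields $O(m)$, which is the claim. Throughout I would use that $n\le 2m$, since each edge contributes at most two endpoints, so the node feature matrix has $O(m)$ entries ($d$ being constant).

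\textbf{Feature extraction.} First I would argue that all $d$ features of Section~\ref{ssec:features} are computable in a constant number of passes over the $m$ edges. Degrees and in/out weights are running sums maintained in $O(1)$ per edge; the IAT average, variance, minimum, maximum, and \lifetime\ are likewise single-pass streaming aggregates. The only delicate features are the IAT quantiles, for which I would note that sorting each node's inter-arrival list costs $O(\sum_v m_v\log m_v)$ in total---near-linear and at worst a logarithmic factor above the $O(m)$ edge scan---while a constant-space streaming quantile sketch removes even that factor. Hence phase (i) is $O(m)$ and is the only term that scales with input size.

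\textbf{Scoring.} Next I would show phase (ii) is independent of $m$. The objective in Eq.~\eqref{obj} only needs the scores $s_{i,j}$ of the $k$ anomalies, so for each of the $l$ plots we fit one detector in $2$-d and evaluate it on just the $k$ anomalous points. Here I would invoke the property already stated in Section~\ref{ssec:scoring}: iForest has \emph{constant} training time and space, since it builds a fixed number of trees from subsamples of fixed size, so fitting a plot is $O(1)$ and each of the $k$ evaluations is $O(1)$. Thus scoring is $O(l\,k)$, with $l$ fixed; even if one treats $k$ as growing with the graph, $k\le n=O(m)$ keeps this term linear.

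\textbf{Selection and the main obstacle.} Finally, for phase (iii) I would maintain, for each anomaly $a_i$, its current best score $\max_{p_j\in\mS}s_{i,j}$, so that the marginal gain $\Delta_f(p\mid\mS)=\sum_{a_i}\max(0,\,s_{i,p}-\max_{p_j\in\mS}s_{i,j})$ of any candidate is evaluated in $O(k)$. The lazy (priority-queue) greedy performs $l$ initial gain computations, then $b$ selections, each triggering at most $O(l)$ re-evaluations with $O(\log l)$ heap operations; every such step is $O(k)$ and none touches $\graph$, so phase (iii) is bounded by a function of $l,k,b$ alone. I expect the main obstacle to be arguing that phase (ii) is \emph{truly} constant in $m$: this rests entirely on iForest's subsample-based training, so I would have to be explicit that (a) only the $k$ anomalies---not all $n$ nodes---are scored, and (b) drawing fixed-size subsamples from the already-built feature matrix is $O(1)$ rather than $O(n)$. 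Granting these, the total is $O(m)+O(lk)+O(\mathrm{poly}(l,k,b))=O(m)$, completing the proof.
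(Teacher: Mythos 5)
Your proposal is correct and follows essentially the same three-phase decomposition as the paper's own proof: feature extraction in $O(m)$, pair-plot scoring whose cost is independent of graph size thanks to iForest's fixed-size subsampling (the paper bounds it as $O(d^2 t(k+\psi)\log\psi)$), and greedy selection in $O(d^2kb)$, with $\{d,\psi,t,k,b\}$ treated as constants. Your extra care regarding the IAT quantiles (flagging the potential per-node sorting log-factor and how to eliminate it) is a refinement of a point the paper's proof glosses over, not a different argument.
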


 \begin{proof}
The complexity of feature extraction (Section \ref{ssec:features}) is the most computationally demanding part of \method, and requires time linear on the number of edges in the graph---since all features are defined on the incident edges per node.

We show that the other two parts---scoring the given anomalies (Section \ref{ssec:scoring}) and selecting pair plots to present (Section \ref{ssec:selecting})---add only a constant time, and are \textit{independent of the graph size}.

Let us denote by $d$ the number of features we extract. For scoring, we generate 2-d representations by pairing the features.
For each pair, we train an iForest model in 2-d. Following their recommended setup, we typically \underline{sample} $256$ points (i.e., nodes) from the data and train around $t=100$ extremely randomized trees, called isolation trees.
The complexity of training an iForest is $O(t\psi\log \psi)$, and scoring a set of $|\mA|=k$ anomalous points takes $O(tk\log \psi)$.
The total scoring complexity is thus $O(d^2t(k+\psi)\log \psi)$, and is independent of the total number of nodes or edges in the input graph.

At last, we employ the greedy selection algorithm, where we compute the marginal gain  of adding each plot to our select-set, which takes $O(d^2k)$ and pick the plot with the largest gain through a linear scan in $O(d^2)$. We repeat this process $b$ times until the budget is exhausted. The total selection complexity is thus $O(d^2kb)$, which is also independent of graph size.

Thus, having extracted the features in $O(m)$, \method takes constant additional time to produce its output, where all of $\{d,\psi,t,k,b\}$ are (small) constants in practice.
\hfill $\blacksquare$
 \end{proof}

\subsection{Discussion}
Here we answer some questions that may be in the reader's mind.

\noindent \textit{1. How do we define ``anomaly?''} 
We defer this question to the off-the-shelf 
anomaly detection algorithm
(iForest \cite{iForest}, LOF \cite{LOF}, DB-outlier \cite{conf/vldb/KnorrN99}, etc.)
Our focus here is to succinctly and interpretably show what makes the pre-selected items
stand out from the rest.

\vspace{1mm}

\noindent \textit{2. Why pair plots?} 
Using pair plots for justification is an essential, concious choice we make for several reasons:
(a) scatterplots are easy to look at and quickly interpret 
(b) they are universal and non-verbal, in that we need not use language to convey the anomalousness of points -- even people unfamiliar with the context of Enron will agree that the point ``Jeff Skilling'' in Fig. \ref{fig:jewel} is far away from the rest, and
(c) they show where the anomalies lie {\em relative to the normal points} -- the contrastive visualization of points is {more convincing} than stand-alone rules.

\vspace{1mm}

\noindent \textit{3. How do we choose the budget?} We designed our explanation objective to be 
budget-conscious, 
and let the budget be specified by the analyst. In general, humans have a working memory of size ``seven, plus or minus two'' \cite{miller1956mns}.


\section{Experiments}
\label{sec:exp}
In this section, we empirically evaluate \method on three, diverse datasets. Our experiments were designed to answer the following questions:
\begin{compactenum}
	\item[\textbf{[Q1]}] \textbf{\explanationquality:} How well can \method ``explain'' or ``blame'' the given anomalies?
	\item[\textbf{[Q2]}] \textbf{\scalability: }How does \method scale with the input graph size and the number of anomalies?
	\item[\textbf{[Q3]}] \textbf{\discoveries: }Does \method lead to interesting and intuitive explanations on real world data?
\end{compactenum}
These are addressed in Sec.~\ref{ssec:quality}, \ref{ssec:scalability} and \ref{ssec:discoveries} respectively. Before detailing our empirical findings, we describe the datasets used and our experimental setup.

\subsection{Dataset Description}\label{ssec:data}
To illustrate the generality of our proposed domain-agnostic $t$-graph features and \method itself, we select our datasets from diverse domains: e-mail communication (\enron), co-authorship (\dblp) and computer (\lbnl) networks. These datsets are publicly available, unipartite directed graphs. We describe them below and provide a summary in Table~\ref{tab:datasets}.

\renewcommand{\mywidth}{0.37in}
\begin{table}[tb!]
\caption{Datasets studied.\label{tab:datasets}}
	\centering
	\begin{tabular}{C{0.7in}C{\mywidth}C{\mywidth}C{\mywidth}C{0.63in}}
		\toprule
		\textbf{Dataset} & \textbf{Nodes} & \textbf{Edges} & \textbf{Time} & \textbf{Descrip.}\\
		\midrule
		\enron \cite{shetty2004enron}&  151 & 20K & 3 yrs & {\scriptsize email comm.} \\
		\dblp \cite{dblp} & 1.3M & 19M & 25 yrs & {\scriptsize coathorship}\\
		\lbnl \cite{DBLP:conf/imc/PangABLPT05} & 2.5K & 0.2M & 1 hr & {\scriptsize IP-IP comm.}\\
		\bottomrule
	\end{tabular}
\end{table}

\noindent \enron: This dataset consists of 19K emails exchanged between 151 \enron employees during the period surrounding the scandal\footnote{\url{https://en.wikipedia.org/wiki/Enron_scandal}} (May 1999-June 2002). The communications are on daily granularity.

\noindent \lbnl: This dataset details network traffic between 2.5K computers during a one hour interval known to contain network attacks (\eg port scans). Traffic is recorded at second granularity.

\noindent \dblp: This dataset contains the co-authorship network of 1.3M authors over 25 years from 1990 to 2014.  The networks are collected at yearly granularity.

\subsection{Experimental Setup}
To obtain ``ground-truth'' anomalies for \method input, we use the iForest \cite{iForest} algorithm on features described in Sec.~\ref{ssec:features} for each $t$-graph. This yields a ranked list of points with scores in $[0,1]$ (higher value suggests higher abnormality), from which we pick the desired top $k$. Analogously, we use iForest for computing the anomaly score in each pair plot. We note  that the analyst is free to choose any anomaly detector(s) for both/either of the above stages, making \method \textit{detector-agnostic}. However, it is recommended that the same methods be used for both stages to ensure ranking similarities.

\begin{figure}[tb!]
	\centering
	\begin{tabular}{c|c}
		\includegraphics[width=0.47\columnwidth]{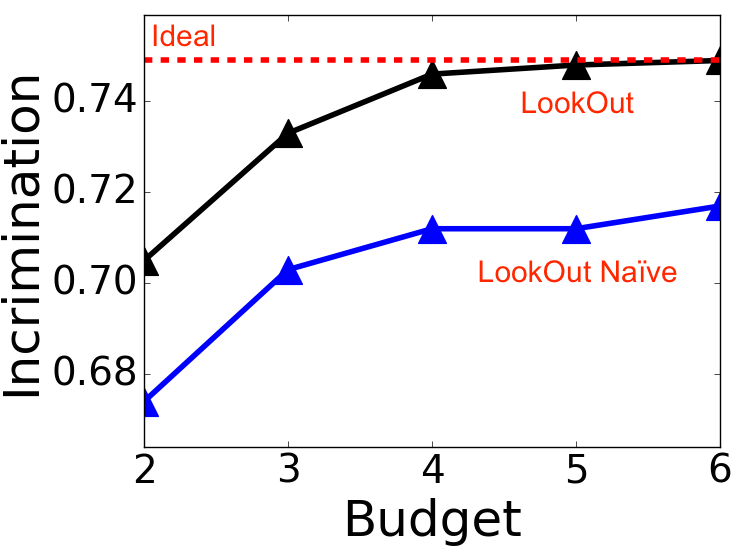} &
		\includegraphics[width=0.47\columnwidth]{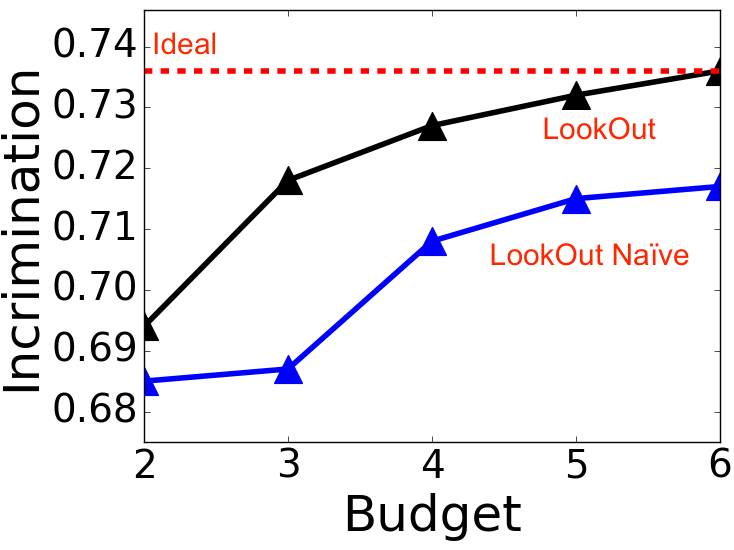}\\
		\includegraphics[width=0.47\columnwidth]{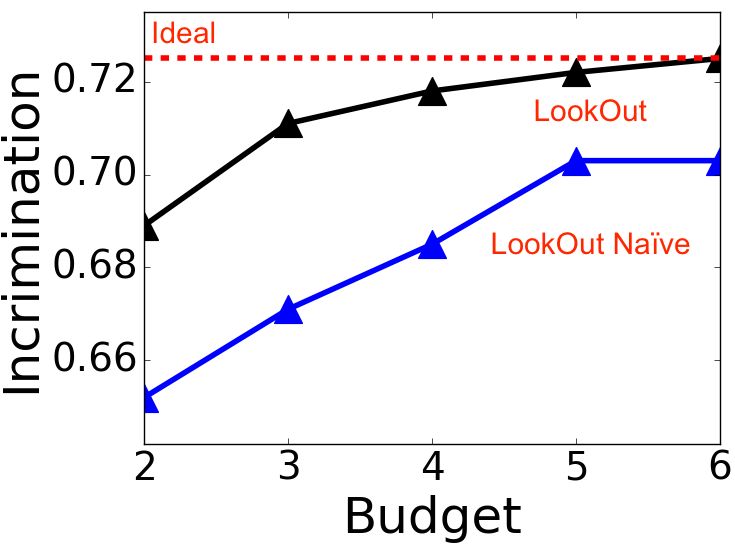} &
		\includegraphics[width=0.47\columnwidth]{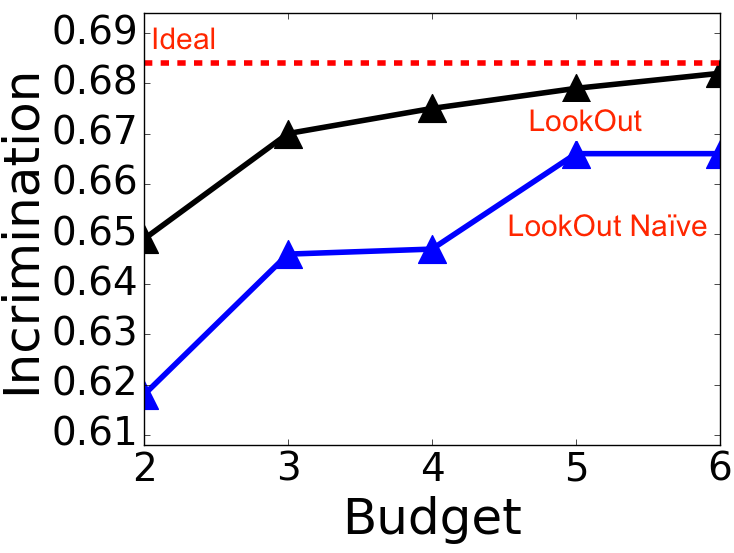}\\
		\includegraphics[width=0.47\columnwidth]{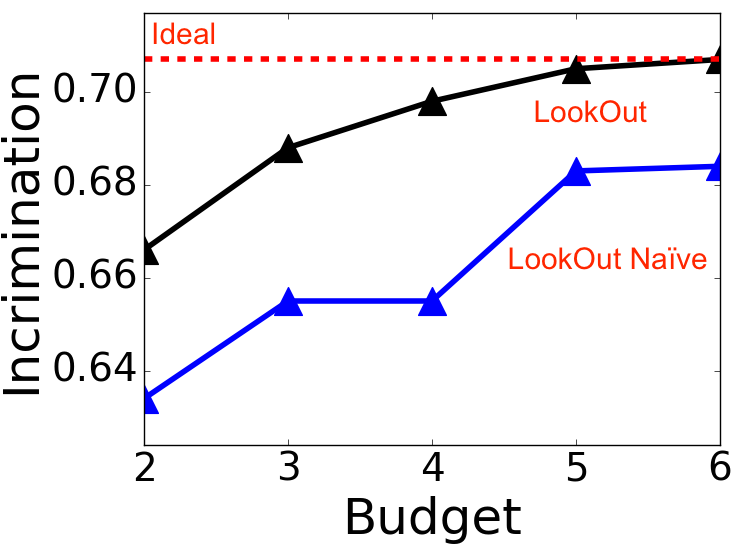} &
		\includegraphics[width=0.47\columnwidth]{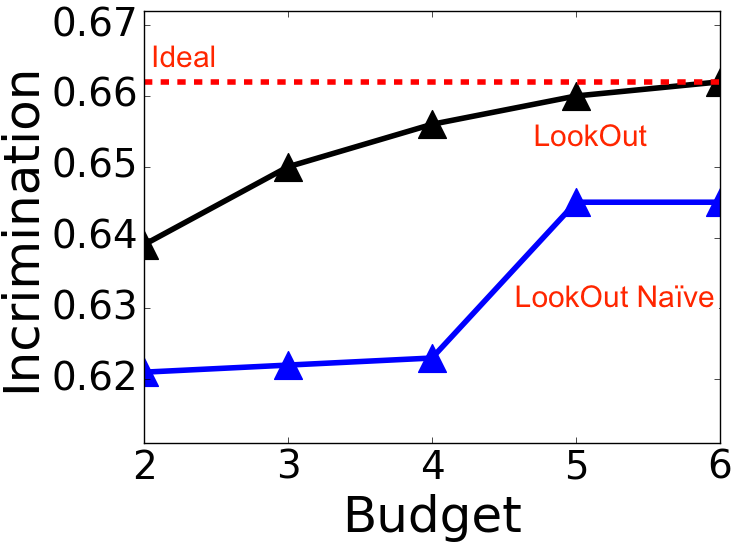}
	\end{tabular}
	\caption{\method explains away top 10-30 anomalies with 4-5 plots. Results shown on \lbnl (left) and \enron (right) with $k = \{10, 20, 30\}$ anomalies (top to bottom). \label{fig:quality}}
\end{figure}

\subsection{\explanationquality} \label{ssec:quality}
We quantify the quality of explanation provided by a set of plots $\mS$ through their \textit{\incrimination} score:
\begin{equation}
	\textit{incrimination}(\mS) =  \frac{f(\mathcal{S})}{\left| \mA \right|}
\end{equation}
where $\mA$ is the set of input anomalies and $f(\cdot)$ is the objective funtion defined in Eq.~\ref{obj}. Intuitively, \textit{\incrimination} is the average maximum blame that an input anomaly receives from any of the selected plots.

Due to the lack of comparable prior works, we use a na\"{i}ve version of our approach, called \baseline which ignores the submodularity of our objective. Instead, \baseline assigns a score to each plot by summing up scores for all given anomalies and chooses the top $b$ plots for a given budget $b$.

Fig.~\ref{fig:quality} compares the \textit{\incrimination} scores of both \baseline and \method on the \enron and \lbnl datasets for several choices of $k$ and $b$. The red dotted line indicates the ideal value, $f(\mP)$, \ie the highest achievable \incrimination (by selecting all plots). Fig.~\ref{fig:quality} shows that \method consistently outperforms \baseline and rapidly converges to the ideal \textit{incrimination} with increasing budget.  Results on \dblp were similar, but excluded for space constraints.

\subsection{\scalability} \label{ssec:scalability}
We empirically studied how \method runtime varies with (i) the graph size $m$ and (ii) the number of anomalies $k$. All experiments were performed on an OSX personal computer with 16GB memory. Runtimes were averaged over 10 trials.

To measure scalability with respect to graph size, we sampled edges from $[100K, 10M]$ in logarithmic intervals from \dblp while fixing budget to $b=5$ and $k=50$ anomalies. Fig.~\ref{fig:scalability} (left) plots runtime vs. number of edges, demonstrating a linear fit with slope $\approx$ 1. This empirically confirms Lem.~\ref{theo:scale}.

We also study the variation of runtime with the number of anomalies, as feature extraction incurs a constant overhead on each dataset.  Fig.~\ref{fig:scalability} (right) illustrates linear scaling with respect to number of anomalies for a \dblp subgraph with 10K edges.

\begin{figure}[tb!]
	\centering
	\begin{tabular}{cc}
		\includegraphics[width=0.48\columnwidth]{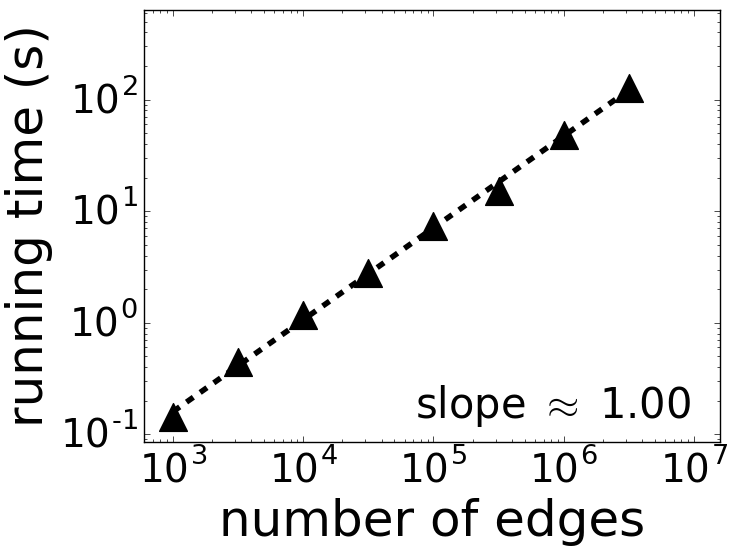} &
		\includegraphics[width=0.48\columnwidth]{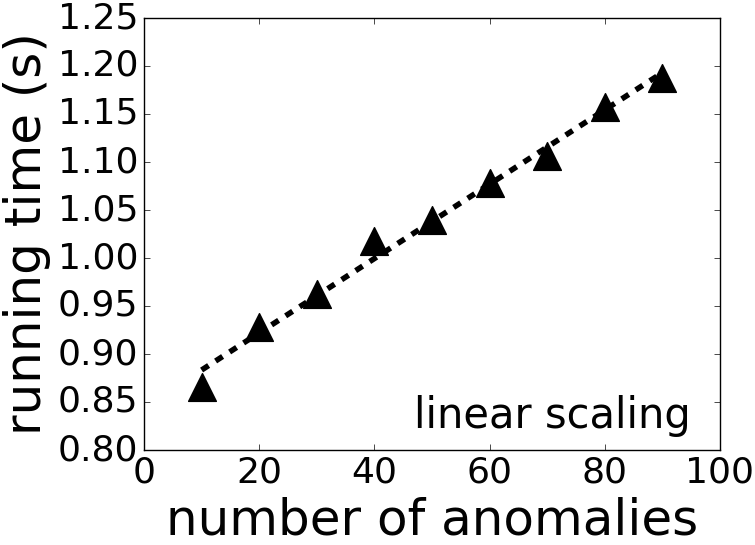}
	\end{tabular}
	\caption{\textbf{\method scales linearly} with number of edges (left) and number of anomalies (right). \label{fig:scalability}}
\end{figure}

\subsection{\discoveries} \label{ssec:discoveries}
\renewcommand{\mywidth}{0.46}
\begin{figure}[tb!]
	\centering
	\begin{tabular}{cc}
		\includegraphics[width=\mywidth\columnwidth]{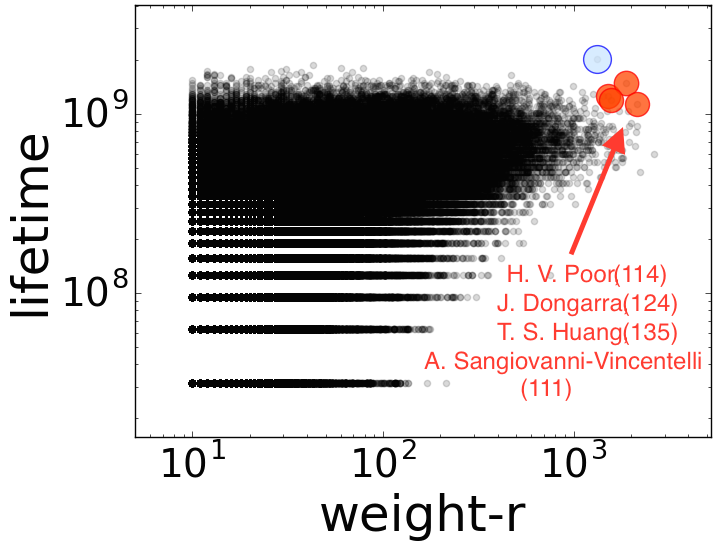}&
		\includegraphics[width=\mywidth\columnwidth]{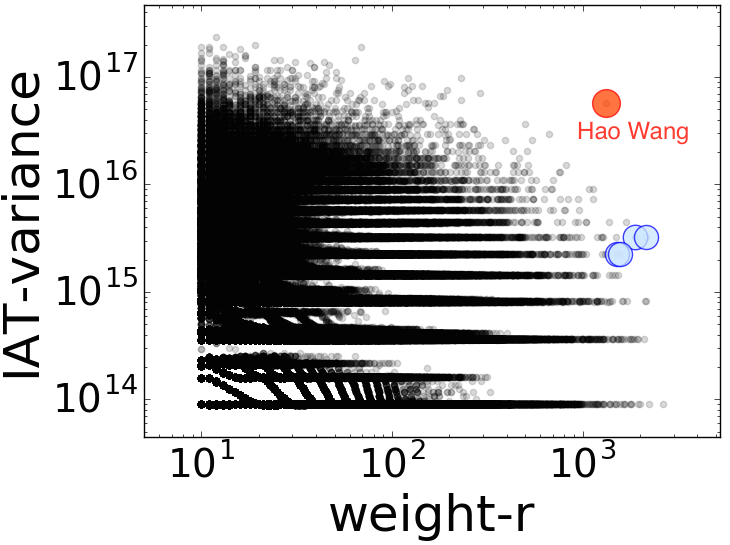}\\
		(a)& (b)\\[1mm]
		\hline\\
		\\
		\includegraphics[width=\mywidth\columnwidth]{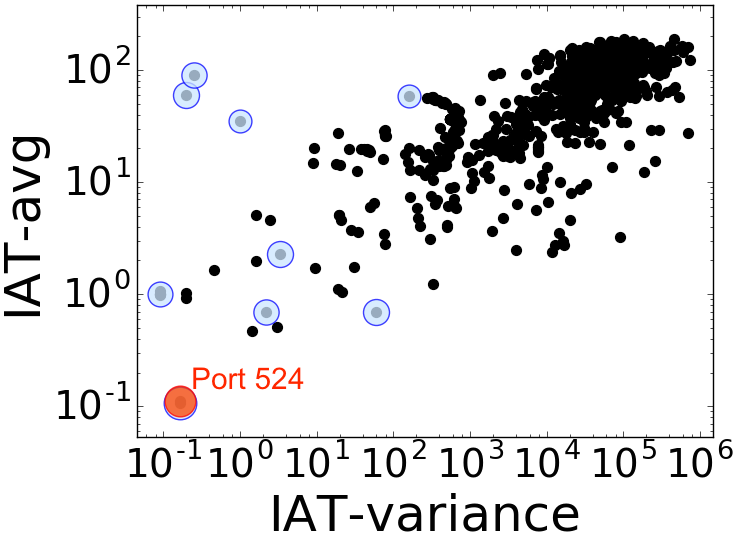} &
		\includegraphics[width=\mywidth\columnwidth]{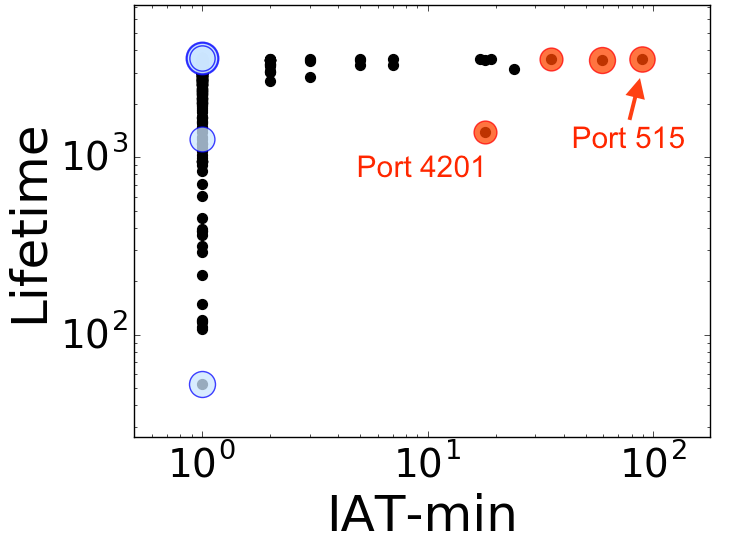}\\
		(c) & (d)\\
		\includegraphics[width=\mywidth\columnwidth]{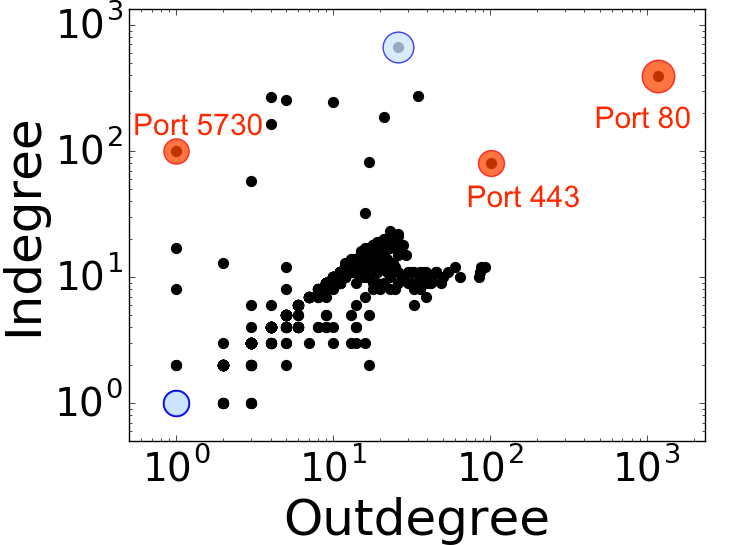}&
		\includegraphics[width=\mywidth\columnwidth]{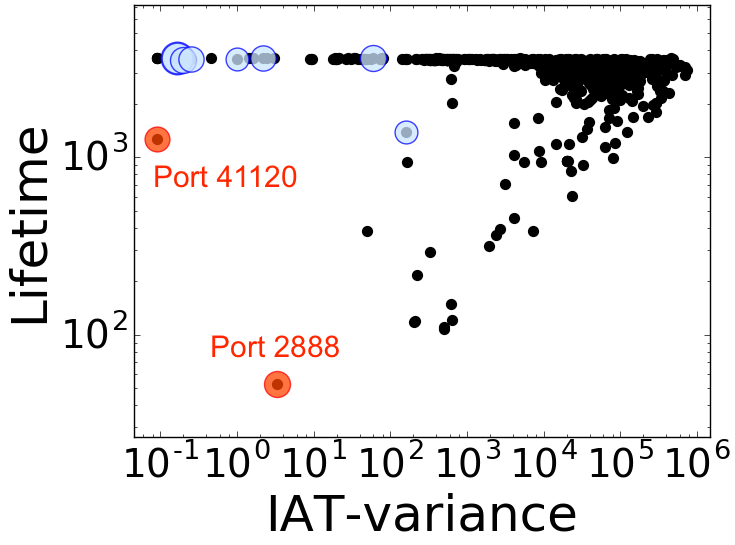}\\
		(e) & (f)
	\end{tabular}
	\caption{\textbf{Discoveries using \method on detected anomalies:} \method partitions and explains anomaly detection results from iForest on \dblp (a-b) and \lbnl (c-f) datasets. \label{fig:discovery:detected}}
\end{figure}

In this section, we present our discoveries using \method on all three real world datasets. Scoring in 2-d was performed using iForest with $t = 100$ trees and sample size $\psi=64$ (\enron) and $\psi=256$ (\lbnl and \dblp).  We use dictated anomalies for \enron, and detected anomalies for \lbnl and \dblp datasets to demonstrate performance in both settings.

\vspace{1mm}
\noindent \enron: We used two top actors in the \enron scandal, \texttt{Kenneth Lay} (CEO) and \texttt{Jeff Skilling} (CFO) as dictated anomalies for \method and sought explanations for their abnormality based on internal e-mail communications. With $b$=2, \method produced the plots shown in Fig.~\ref{fig:jewel} (right). Explanations indicate that \texttt{Jeff Skilling} had an unsually large \iatmax for the number of employees he communicated with (\outd). On the other hand, \texttt{Kenneth Lay} sent emails to an abnormally large number of employees (\outd) given the time range during which he emailed anyone (\lifetime).

\vspace{1mm}
\noindent \dblp: We obtained ground truth anomalies by running iForest on the high-dimensional space spanned by a subset of features from Sec.~\ref{ssec:features}. With $k=5$, the detected anomalous authors were \texttt{Alberto L. Sangiovanni-Vincentelli}, \texttt{H. Vincent Poor}, \texttt{Hao Wang}, \texttt{Jack Dongarra} and \texttt{Thomas S. Huang}. The explanations provided by \method with $b=2$ are shown in Fig.~\ref{fig:discovery:detected}a-b. Thus, the anomalous authors users are partitioned into two groups. The members of the first group, \texttt{Alberto L. Sangiovanni-Vincentelli}, \texttt{H. Vincent Poor}, \texttt{Jack Dongarra}, and \texttt{Thomas S. Huang} are anomalous because they had usually high duration during which they published papers (\lifetime) and total number of co-authorships (\inwr). This is consistent with their high h-indices obtained from their respective google scholar pages (see brackets in Fig.~\ref{fig:discovery:detected}a). The second group consists of only \texttt{Hao Wang}, who was also anomalous in this plot, but is best explained by very high \iatvar for his \inwr value, shown in Fig.~\ref{fig:discovery:detected}b.

\vspace{1mm}
\noindent \lbnl: We ran \method with $k$=10 and $b$=4 and obtained the output pair plots as shown in Fig.~\ref{fig:discovery:detected}c-f. We observe the different anomalous nodes that have been incriminated differently depending on the pair plot. \texttt{Port 524} is highlighted in Fig.~\ref{fig:discovery:detected}c due to its very low \iatavg and \iatvar hinting at it being hotspot port with a large number of transmissions at a quick continuous rate. 

Fig.~\ref{fig:discovery:detected}d shows 4 anomalies including \texttt{Port 515} and \texttt{Port 4201}. In fact, \texttt{Port 515} handles requests based on the Line Printer Daemon Protocol which establishes network printing services. Our observations of a high \iatmin over a long \lifetime are in tandem with print requests being polled over the network at fixed time intervals.

Fig.~\ref{fig:discovery:detected}e incriminates \texttt{Port 80}, \texttt{Port 443} and \texttt{Port 5730}. \texttt{Port 80} is the designated port for the internet hypertext transfer protocol (HTTP) and helps retrieve HTML data from web servers. As such, we observe that \texttt{Port 80} has an extremely high \ind and \outd, indicating that this port is used for a large number of diverse connections between several web servers. Its contemporary counterpart \texttt{Port 443} is assigned for secure communications (HTTPS) and is thus incriminated for the same identifying reasons as the former. 

\texttt{Port 2888} in Fig.~\ref{fig:discovery:detected}f is used by ZooKeeper, a centralized coordination service for distributed applications like \hadoop\footnote{\url{https://wiki.apache.org/hadoop/ZooKeeper}}. It appears with an abnormally low \lifetime and \iatvar probably indicating a small active transmission time which involved a quick configuration broadcast between devices.

\vspace{1mm}

Note that on all datasets, anomalous points are clearly visually distinguishable, and often complementary between pair plots.  This is in line with our desired explanation task, and achieved as a result of our \method subset selection objective and approach.

\section{Related Work}
\label{sec:background}

While there is considerable prior work on anomaly detection \cite{chandola2009anomaly,journals/datamine/AkogluTK15,journals/tkde/GuptaGAH14}, literature on anomaly description is comparably sparse.  Several works aim to find an optimal feature subspace which distinguishes anomalies from normal points.  \cite{conf/cikm/KellerMWB13} aims to find a subspace which maximizes differences in anomaly score distributions of all points across subspaces.  \cite{conf/aaai/KuoD16} instead takes a constraint programming approach which aims to maximize differences between neighborhood densities of known anomalies and normal points.  An associated problem focuses on finding minimal, or optimal feature subspaces for each anomaly.  \cite{conf/vldb/KnorrN99} aims to give ``intensional knowledge'' for each anomaly by finding minimal subspaces in which the anomalies deviate sufficiently from normal points using pruning rules.  \cite{conf/icde/DangANZS14,conf/pkdd/DangMAN13} use spectral embeddings to discover subspaces which promote high anomaly scores, while aiming to preserve distances of normal points.  \cite{conf/icdm/MicenkovaNDA13} instead employs sparse classification of an inlier class against a synthetically-created anomaly class for each anomaly in order to discover small feature spaces which discern it.  \cite{kopp2014interpreting} proposes combining decision rules produced by an ensemble of short decision trees to explain anomalies. \cite{journals/tkde/AngiulliFP13} augments the per-outlier problem to include outlier groups by searching for single features which differentiate many outliers.

All in all, none of these works meet several key desiderata for anomaly description: (a) quantifiable explanation quality, (b) budget-consciousness towards analysts (returning explanations which do not grow with size of the anomalous set), (c) visual interpretability, and (d) a scalable descriptor, which is sub-quadratic on the number of nodes and at worst polynomial on (low) dimensionality.  Table \ref{tbl:related_comparison} shows that unlike existing approaches, our \method approach is designed to give quantifiable explanations which aim to maximize \emph{incrimination}, respect human attention-budget and visual interpretability constraints, and scale linearly on the number of graph edges.

\begin{table}[!t]
	\caption{\label{tbl:related_comparison} Comparison with other anomaly description approaches, in terms of four desirable properties.}
	\centering
	\resizebox{0.49\textwidth}{!}{
		\begin{tabular}{l | llllll || l}
			\hline
			Properties vs. Methods                        & \rotatebox[origin=c]{90}{Knorr et al. \cite{conf/vldb/KnorrN99}} & \rotatebox[origin=c]{90}{Dang et al.\cite{conf/icde/DangANZS14}} & \rotatebox[origin=c]{90}{Angiulli et al. \cite{journals/tkde/AngiulliFP13}} & \rotatebox[origin=c]{90}{\, Micenkova et al. \cite{conf/icdm/MicenkovaNDA13}} & \rotatebox[origin=c]{90}{Kopp et al. \cite{kopp2014interpreting}} & \rotatebox[origin=c]{90}{Keller et al. \cite{conf/cikm/KellerMWB13}} & \rotatebox[origin=c]{90}{\method} \\
			\hline
			Quantifiable Explanations &                                  & \tick                                  & \tick                                            & \tick                                            &                    & \tick                     & \tick   \\
			Budget-conscious          &                                  &                                  &                                             &                                            &                 & \tick                        & \tick   \\
			Visually Interpretable    &                                  &                                  &                                             &                                            &               &                           & \tick   \\
			Scalable                  &                                  & \tick                                  &                                             &                                            & \tick                  & \tick                       & \tick   \\
			\hline
	\end{tabular}}
\end{table}

\section{Conclusions}
\label{sec:concl}
In this work, we formulated and tackled the problem of succinctly and interpretably explaining anomalies in $t$-graphs to human analysts.  We identified a number of domain-agnostic features of $t$-graphs, and subsequently made the following contributions: (a) \textbf{problem formulation}: we formulate our goal for explaining anomalies using a budget of visually interpretable pair plots, (b) \textbf{explanation algorithm:} we propose a submodular objective to quantify explanation quality and propose the \method method for solving it approximately with guarantees, (c) \textbf{generality}: we show that \method can work with diverse domains and any detection algorithm, and (d) \textbf{scalability:} we show theoretically and empirically that \method scales linearly in the size of the input graph.  We conduct experiments on real e-mail communication, co-authorship and IP-IP interaction $t$-graphs and demonstrate that \method produces qualitatively interpretable explanations for ``ground-truth'' anomalies and achieves strong quantitative performance in maximizing our proposed objective.


\bibliographystyle{abbrv}
\bibliography{BIB/other}

\appendix
\section{Proof of Submodularity}
\label{proof}

Our \textbf{explanation objective} is defined as: 
$$
{\max_{{\mS\subset \mP, |\mS|=b}}}\;\;\; f(\mathcal{S}) = \sideset{}{\limits_{a_i \in \mathcal{A}}}\sum \;\;\max_{p_j \in \mathcal{S}}\;s_{i,j}  
$$
Next we show that the set function $f$ is submodular.

\begin{proof}
Consider two sets $\mathcal{S}$ and $\mathcal{T}$, $\mathcal{S} \subseteq \mathcal{T} \subseteq \mathcal{P}$, and
the function, $$\mathcal{F} (p_k, \mathcal{S}, \mathcal{T}) = \Delta_f(p_k \ | \ \mathcal{S}) - \Delta_f(p_k \ | \ \mathcal{T}) \;.$$

By definition of submodularity, function $f$ is submodular if and only if, 
$$\mathcal{F} (p_k, \mathcal{S}, \mathcal{T}) \geq 0 \;\; \forall \ p_k \in \mathcal{P} \setminus \mathcal{T}\;.$$


For simplicity lets write $\mathcal{F}$ as,
$$\mathcal{F} (p_k, \mathcal{S}, \mathcal{T}) =  \sideset{}{_{a_i \in \mathcal{A}}}\sum t_i$$
where each term $t_i$ is given as
\begin{multline*}
$$t_i = [\max(s_{i,k}, \max\limits_{p_j \in \mathcal{S}} s_{i,j}) \ -  \max(s_{i,k}, \max\limits_{p_j \in \mathcal{T}}  s_{i,j}  \\ + \max\limits_{p_j \in \mathcal{T}}\; s_{i,j} - \max\limits_{p_j \in \mathcal{S}}\; s_{i,j}]\;.$$
\end{multline*}	
Now we consider the following three cases:
\begin{case}
For some $a_i \in \mathcal{A} \ \exists \ p_{j'} \in \mathcal{S}$ s.t. $s_{i,j} \geq s_{i,k}$, then,
\vspace{-2mm}
$$\max(s_{i,k}, \max\limits_{p_j \in \mathcal{S}} s_{i,j}) = \max\limits_{p_j \in \mathcal{S}} s_{i,j}\;,$$
and similarly since $\mathcal{S} \subseteq \mathcal{T}, \ p_{j'} \in \mathcal{T}$
$$\max(s_{i,k}, \max\limits_{p_j \in \mathcal{T}} s_{i,j}) = \max\limits_{p_j \in \mathcal{T}} s_{i,j}\;.$$
$\therefore$ \ for all such $a_i \in \mathcal{A}, \ t_i = 0\;.$
\end{case}

\begin{case}
For some $a_i \in \mathcal{A} \ \nexists \ p_{j'} \in \mathcal{S}$ s.t. $s_{i,j'} \geq s_{i,k}$ and $\exists \ p_{j'} \in \mathcal{T}$ s.t. $s_{i,j'} \geq s_{i,k}$ , then,
$$\max(s_{i,k}, \max\limits_{p_j \in \mathcal{S}} s_{i,j}) = s_{i,k}$$
and conversely,
$$\max(s_{i,k}, \max\limits_{p_j \in \mathcal{T}} s_{i,j}) = \max\limits_{p_j \in \mathcal{T}} s_{i,j}$$
$\therefore$ \ for all such $a_i \in \mathcal{A},$ 
$t_i = s_{i,k} - \max\limits_{p_j \in \mathcal{S}} s_{i,j} \geq 0\;.$
\end{case}

\begin{case}
For some $a_i \in \mathcal{A} \ \nexists \ p_{j'} \in \mathcal{T}$ s.t. $s_{i,j'} > s_{i,k}$, then,
\vspace{-2mm}
$$\max(s_{i,k}, \max\limits_{p_j \in \mathcal{T}} s_{i,j}) = s_{i,k}$$
and similarly since $\mathcal{S} \subseteq \mathcal{T}$,
$$\max(s_{i,k}, \max\limits_{p_j \in \mathcal{S}} s_{i,j}) = s_{i,k}\;.$$
$\therefore$ \ for all such $a_i \in \mathcal{A}$, 
\vspace{-1mm}
\begin{multline*}
$$t_i = \max\limits_{p_j \in \mathcal{T}} s_{i,j} -
\max\limits_{p_j \in \mathcal{S}} s_{i,j}
 \geq 0 \ \ (\because \mathcal{A} \subseteq \mathcal{B})$$
\end{multline*}
\end{case}
\vspace{2mm}

As a result of all three possible cases we see that $t_i \geq 0 \ \forall \ a_i \in \mathcal{A}$. Hence,
$$\mathcal{F} (p_k, \mathcal{S}, \mathcal{T}) =  \sideset{}{_{a_i \in \mathcal{A}}}\sum t_i \geq 0$$
$$\therefore \ \ \mathcal{F} (p_k, \mathcal{S}, \mathcal{T}) \geq 0 \ \forall \ p_k \in \mathcal{P} \setminus \mathcal{T}$$
We conclude that the set function $f$ is submodular. \hfill $\blacksquare$
\end{proof}

\end{document}